\newcommand{\M}{\mathsf{M}}
\newcommand{\T}{\mathsf{T}}
\newcommand{\ef}{\mathcal{E}(\mathcal{H})}
\newcommand{\lh}{ \mathcal{L}(\mathcal{H})}
\newcommand{\lk}{ \mathcal{L}(\mathcal{K})}
\newcommand{\hk}{\mathcal{K}}
\newcommand{\h}{\mathcal{H}}							
\newtheorem{theorem}{Theorem}
\newtheorem{proposition}{Proposition}
\begin{document}
\title{Single-shot labeling of quantum observables}
\author{Nidhin Sudarsanan Ragini}
\author{Mario Ziman}
\affiliation{RCQI, Institute of Physics, Slovak Academy of Sciences, D\'ubravsk\'a cesta 9, 84511 Bratislava, Slovakia}
\begin{abstract}
We identify and study a particular class of distinguishability problems for quantum observables (POVMs), in which observables  with permuted effects are involved, which we call as the\textit{ labeling problem}. Consequently, we identify the binary observables those can be “labeled” perfectly. In this work, we study these problems in the single-shot regime. 
\end{abstract}
		
\maketitle
%%%%%%%%%%%%%%%%%%%%%%%%%%%%%%%%%%%%%%%%%%%
%%%%%%%%%%%%%%%%%%%%%%%%%%%%%%%%%%%%%%%%%%%	
\section{Introduction}
%%%%%%%%%%%%%%%%%%%%%%%%%%%%%%%%%%%%%%%%%%%
\noindent The lack of understanding of the individual action performed by a measurement apparatus is the essence of most of quantum controversies. Quantum theory provides a way to describe its probabilistic aspects for all practical purposes. In particular, when one is interested to predict the statistics of observed outcomes, each measurement outcome is fully determined by a positive operator known as effect (see for example \cite{heinosaari2011mathematical}). An effect itself does not explain the physics, or form of the recorded event, but captures the probabilities of its occurrence. In order to describe measurement statistics, we assign labels and effects for all outcomes. While effects associated with the measurement apparatus can be verified experimentally, the choice of labels cannot be tested by the measurement apparatus itself.  
        
In this paper, we consider the situation when the mathematical description of effects is known; However, their connection to the labels are lost. The question of outcomes' labeling we are going to address is the task of assigning a mathematical description to the unlabeled outcomes, assuming the effects forming the observable are known. For example, consider a black box with green and red diodes. Each time the measurement is accomplished one of them shines and outcome is recorded. We know the meaurement apparatus is designed to decide whether the measured atom is in its excited, or its ground state. It is not perfect and the associated effects are $E=\eta\dyad{e}$ and $G=(1-\eta)\dyad{e}+\dyad{g}$, respectively. The parameter $\eta$ quantifies the imperfections in registration of excited state. Now, is it possible to identify how the colors and effects are paired? How many measurement runs are needed? It is straightforward to realize that measuring the atom prepared in the ground state can lead only to outcome described by the effect $G$. Therefore, the diode that shines in this situation is necessarily described by the effect $G$. The task of outcome labeling is accomplished and the labels (shining diodes) are identified with effects.  

It turns out that this question is a special case of the discrimination problem whose origin stems from the seminal work of Helstrom \cite{helstrom1969quantum}, who investigated the discrimination problem for quantum states. In particular, in its simplest form we are given a black box device known to be either A or B. Our task is to design an experiment enabling us to conclude whether A or B in a single run of the experiment. The conclusions might be of various relevance or significance. One of the measure is their averaged error probability, known to justify operationally the choice of trace norm as the measure of statistical distance between A and B \cite{helstrom1969quantum}. The outcome labeling problem of $n$-valued measurement aims to discriminate among $n!$ permutations of different labelings, as such, among $n!$ devices. Unfortunately, very little is known about decision problems beyond the binary case. 

The performance of measurements is crucial for the successful development of quantum technologies (see e.g. Ref.~\cite{HRT2023}). Therefore, the elementary, as well as, foundational question of discrimination of quantum measurements has attracted researchers and various versions and aspects of the problem has been studied in Refs.\cite{JFDY2006,ZH2008,ZHS2009,SZ2014,PPKK2018,PPKK2021}. In all of them, either the labeling is assumed to be known, or the equivalence classes ignoring the labelings are considered as descriptions of unknown (and unlabeled) measurement devices. The question of assigning unknown labels was not addressed explicitly. The closest is the work in Ref.~\cite{krawiec2020discrimination}, where authors analyzed the minimum-error discrimination of two SIC POVMs - one of them being an arbitrary, but fixed permutation of the other. The question of labeling is about identification of the permutation, however, in the mentioned paper the permutation is fixed, thus, the problem is a binary decision problem that can be investigated in single shot settings. However, the labeling problem for SIC POVM aims to distinguish among $d^2!$ different permutations ($d$ is the Hilbert space dimension of the measured system).

The paper is organized as follows. In Section II the general problem of labeling is formulated. The Section III is focused on perfect labeling for single shot settings. Minimum error labeling and unambiguous labeling are studied in Sections IV and Section V, respectively. The Section VI introduces the concept of antilabeling and the main results are summarized in Section VII. 

%%%%%%%%%%%%%%%%%%%%%%%%%%%%%%%%%%%%%%%%%%%%%%%%%%%%%%%%%
%%%%%%%%%%%%%%%%%%%%%%%%%%%%%%%%%%%%%%%%%%%%%%%%%%%%%%%%%
\section{Formulation of labeling problem}
\subsection{Mathematical tools}
\noindent The mutually exclusive outcomes \textcolor{black}{$x_1,\dots,x_n$} of $n$-valued quantum measurements are represented by \textit{effects},
i.e. positive operators $M_1,\dots,M_n$, respectively, satisfying
the normalization $M_1+\cdots+M_n=I$.
For most of the situations we can safely say the measurements
are collection of effects, however, rigorously, the measurements
are mappings assigning effects for particular outcomes.
Let us denote by $\Omega$ the ordered set of outcome \textit{labels}
$\Omega := \{x_1,\cdots,x_n\}$ and by $\ef$ the set of all effects.
Then, a quantum measurement is described by an \textit{observable}, as a normalised positive operator-valued
measure (POVM) $\M: \Omega\to\ef$ with $\M(x_j)=M_j$.
Alternatively,
we can think of observables as measure-and-prepare channels $\mathcal{M}$
transforming input states of the measured system
$\varrho$ into output states (effectively probability distributions)
of $n$-dimensional systems
\begin{equation}
\mathcal{M}(\rho) = \sum_{x_k \in \Omega}\text{Tr}[\rho M_k]\dyad{x_k}\,.
\end{equation}	 
Once we have the channel form, the Choi-Jamiolkowski isomorphism \cite{choi1975completely,jamiolkowski1972linear} between completely positive maps and positive operators associates the observable $\M$ with a positive operator
\begin{equation}
  \label{eq:cj_measure_prepare_form}
\mathfrak{M} = (\mathcal{I} \otimes \mathcal{M} )[\Psi_+] = \sum_{k} M_k^\top \otimes\dyad{x_k} \,,
\end{equation}
where $\Psi_+$ is the (unnormalized) maximally entangled state on $\h\otimes\h$ defined by $\Psi_+ := \dyad{\psi_+}$ with $\ket{\psi_+} := \sum_{j=1}^{d}\ket{j}\otimes\ket{j}$ and $\ket{1},\dots,\ket{d}$ form an orthonormal basis of $\h$.

\textcolor{black}{
  The mathematical framework describing the measurements of processes was introduced and developed in Refs.~\cite{gutoski2007toward,chiribella2008circuit,ziman2008process}. The most general single-shot experiment measuring the properties of quantum processes (channels) is illustrated in Fig.~\ref{fig2}. In particular, the channel under consideration $\mathcal{M}$ acts on the subsystem of a bipartite \textit{test state} $\varrho$ and, subsequently, the output is measured by a bipartite \textit{test observable} $\mathsf{F}$ leading to outcome $\alpha$. Suppose the input subsystem of the channel is associated with the Hilbert space $\h$ and its output is associated with Hilbert space $\hk$, thus, the channel $\mathcal{M}$ transforms operators $\lh$ into $\lk$. Further, let us denote by $\h_{\rm anc}$ the Hilbert space associated with the anciliary subsystem, thus, the test state $\varrho$ is a density operator from $\mathcal{L}(\h_{\rm anc}\otimes\h)$ and $F_\alpha\in\mathcal{L}(\h_{\rm anc}\otimes\hk)$ is the effect associated with the outcome $\alpha$ of the test observable $\mathsf{F}$. The outcome $\alpha$ occurs with probability $p_\alpha=\text{tr}\left[(\mathcal{I}\otimes\mathcal{M})[\varrho]F_\alpha\right]$.
  }

\begin{figure}[h]
\centering
\includegraphics[width=0.25\textwidth]{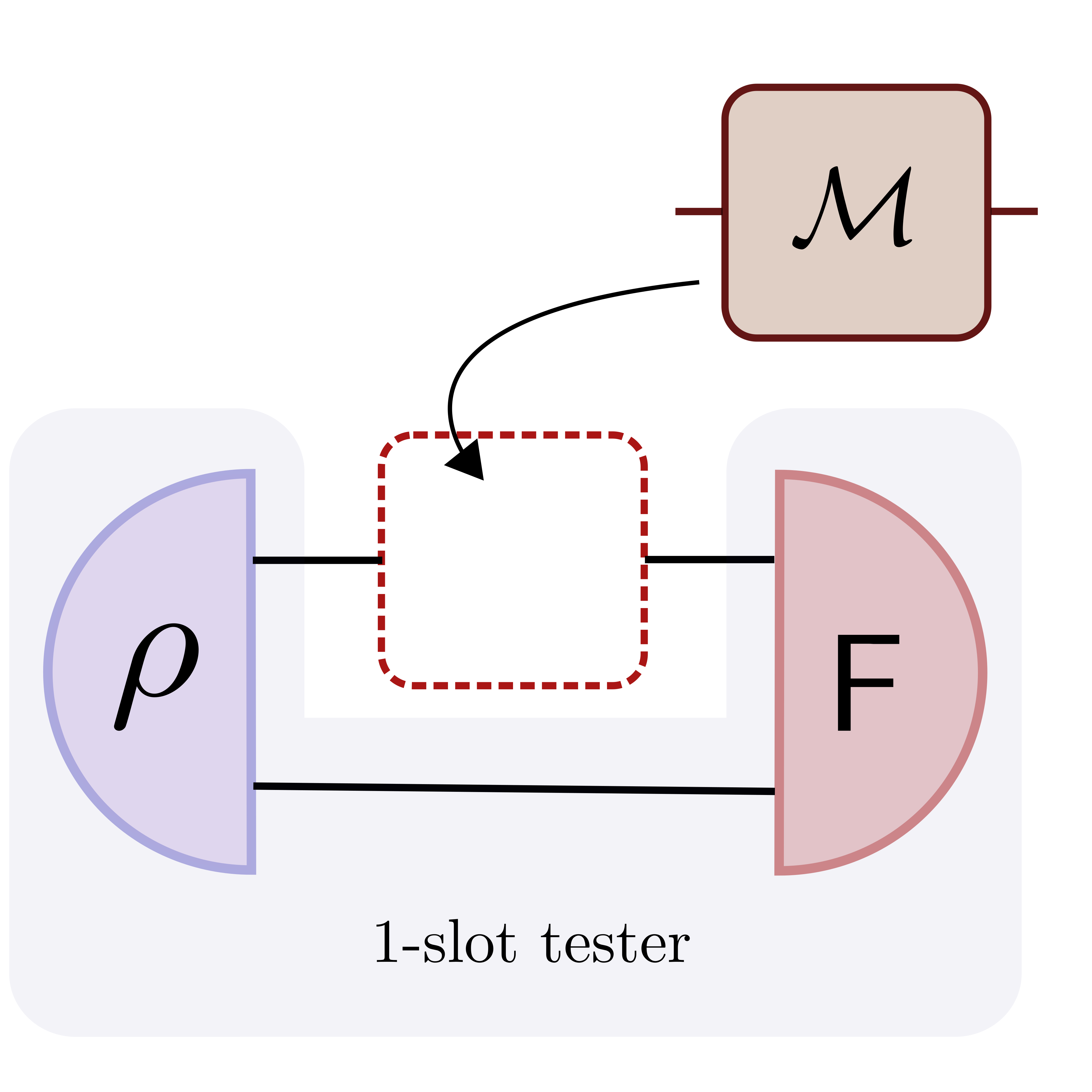}
\caption{\textcolor{black}{To measure a channel $\mathcal{M}$, we send a test state $\rho$ through an extended channel and perform a test observable $\mathsf{F}$ on the evolved state. We can identify that the pair of state and observable constitute the measurement procedure event and, as such, determines a \textit{quantum tester} $\mathsf{T}$. The reduced subsystems entering the unknown chanel is in the state $\xi^\top$, where the transposition is used just to obtain simpler expressions in the other formulas.}}\label{fig2}
\end{figure}

\textcolor{black}{
For any state $\varrho$ the Choi-Jamiolkowski isomorphism defines a completely positive linear map $\mathcal{R}_\varrho:\mathcal{L}(\h\otimes\h)\to\mathcal{L}(\h_{\rm anc}\otimes\h)$ such that $\varrho=(\mathcal{R}_\varrho\otimes\mathcal{I})[\Psi_+]$ (see Ref.~\cite{ziman2008process}). Using its dual $\mathcal{R}^*_{\rho}$ we obtain $p_\alpha=\text{tr}\left[(\mathcal{I}\otimes\mathcal{M})(\mathcal{R}_\varrho\otimes\mathcal{I})[\Psi_+]F_\alpha\right]=\text{tr}\left[(\mathcal{I}\otimes\mathcal{M})[\Psi_+](\mathcal{R}^*_\varrho\otimes\mathcal{I})[F_\alpha]\right]$. In the last expression we may identify the Choi-Jamiolkowski representation $\mathfrak{M}$ of the tested channel $\mathcal{M}$ and introduce the so-called \emph{process effect} $\mathsf{T}_\alpha=(\mathcal{R}^*_\varrho\otimes\mathcal{I})[F_\alpha]$ determining the probability of outcome $\alpha$ for any channel $\mathcal{M}$. As a result we obtain the description of the process measurement by so-called \emph{tester} being a collection of positive operators $\mathsf{T}_1,\dots,\mathsf{T}_m$ ($\mathsf{T}_\alpha\in\mathcal{L}(\h\otimes\hk)$) satisfying the normalization $\T_1+\cdots+\T_m=\xi\otimes I$, where $\xi=(\text{tr}_{\mathcal{K}}[\rho])^\top$ is the transposition of the reduced state of the subsystem passing through the unknown channel $\mathcal{M}$. We recover Born-like formula $p_{\alpha}={\rm tr}[\mathfrak{M}\T_\alpha]$ for the outcome probabilities.
}

\textcolor{black}{
Let us stress that for testing channels of measure-and-prepare form (Eq.~\ref{eq:cj_measure_prepare_form}) the tester operators can be assumed, without loss of generality, to be of the form  $\T_{\alpha} = \sum_{k} \mathsf{H}_{k}^{(\alpha)}\otimes \dyad{x_k}$ \cite{SZ2014}, where for all $k$, the  normalization $\sum_{\alpha}\mathsf{H}_{k}^{(\alpha)} = \xi$ is satisfied and all $\mathsf{H}_k^{(\alpha)}\in\mathcal{L}(\h)$ are positive operators.
  }
		
%%%%%%%%%%%%%%%%%%%%%%%%%%%%%%%%%%%%%%%%%%%%%%
\subsection{Labeling problem}
\noindent The question of labeling takes place when the information on the mapping $x_k\mapsto M_k$ is lost, but the collection of effects is known. If all the effects are different, then for $n$-valued observable there are $n!$ different permutations how the outcomes can be paired with the effects. Therefore, the goal of labeling problem is to discriminate among $n!$ permutations. Let us fix the order of effects and denote by $\M_\sigma$ the observable obtained by $\sigma$ permutation. The result of labeling is the identification of $\sigma$. In what follows we will assume that all the permutations are equally likely, thus the apriori information on $\sigma$ is represented by probability distribution $\pi(\sigma)=1/n!$.

\textcolor{black}{In the language of measure-and-prepare channels the labeling takes the form of discrimination among channels, whose actions are given by $\mathcal{M}_\sigma(A)=\sum_k {\rm tr}[A\M_k]\dyad{x_{\sigma(k)}}=P_\sigma\mathcal{M}_{\rm id}(A)P_\sigma^\dagger$, where $P_\sigma =\sum_k \ket{x_{\sigma(k)}}\bra{x_k}$ is the permutation operator and ${\rm id}$ stands for the indentity permutation. Consequently for the associated Choi operators it holds $\mathfrak{M}_\sigma=(I\otimes P_\sigma)\mathfrak{M}_{\rm{id}}(I\otimes P_\sigma)^\dagger$.}
  Let us denote by $\T_1,\dots\T_m$ ($m=n!$) the operators forming the tester used to discriminate $\mathfrak{M}_1,\dots,\mathfrak{M}_m$, respectively. The performance is characterized by conditional probabilities $p(\sigma|\sigma^\prime)=\rm{tr}[\mathfrak{M}_\sigma \T_{\sigma^\prime}]$. Intuitively, the closer this conditional probability is to $\delta_{\sigma\sigma^\prime}$ the better, but there are various ways how to quantify the performance. \textcolor{black}{Most common one is the average error probability $p_{\rm error}=1-\frac{1}{n!}\sum_\sigma p(\sigma|\sigma)$, where $\frac{1}{n!}\sum_\sigma p(\sigma|\sigma)$ is the corresponding average success probability. Minimizing this function one finds \textit{minimum error labeling}.} Introducing additional inconclusive outcome $\T_{\rm inconclusive}$ and demanding $p(\sigma|\sigma^\prime)=0$ whenever $\sigma\neq\sigma^\prime$ is known as \textit{unambigous labeling}, for which the conclusions are error-free, but the labeling fails with probability $p_{\rm failure}=\frac{1}{n!}{\rm tr}[\sum_\sigma\mathfrak{M}_\sigma \T_{\rm inconclusive}]$. When $p_{\rm error}=p_{\rm failure}=0$ we speak about \textit{perfect labeling}.  

%%%%%%%%%%%%%%%%%%%%%%%%%%%%%%%%%%%%%%%%%%%%%%%
%%%%%%%%%%%%%%%%%%%%%%%%%%%%%%%%%%%%%%%%%%%%%%%
\section{Perfect labeling}
\noindent Perfect discrimination of two quantum channels in single use, $\mathcal{M}_1$ and $\mathcal{M}_2$, corresponds to the scenario when there is no error in each of the conclusions associated with the discrimination task. That is, the average error $p_{\text{e}} =  \mu \text{Tr}\{\mathfrak{M}_1\mathsf{T}_2\}+(1-\mu) \text{Tr}\{\mathfrak{M}_2\mathsf{T}_1\}=0$, where $\mu$ characterizes the \textit{a priori} bias between the channels. In our case there is no bias, thus both labelings are equally likely and $\mu=1/2$. The condition itself (independently of $\mu$) translates to existence of testers $\{\T_1, \T_2\}$ with normalisation \textcolor{black}{$\xi \otimes I$} satisfying \cite{chiribella2008memory,ziman2008process} the identity
\begin{equation}
\mathfrak{M}_1(\xi \otimes I) \mathfrak{M}_2 = O\,
\end{equation}
\textcolor{black}{for some normalisation $\xi$. Let us recall this condition is necessary and sufficient for the single-shot perfect distinguishability of quantum channels and is replacing the orthogonality condition $\varrho_1\varrho_2=O$ for perfect discrimination of two quantum states. In this section we will evaluate the validity of this condition for the labeling problem.}

\subsection{Binary measurements}
\noindent Let us start with the labeling of so-called binary observables characterized by pair of effects $M_1$ and $M_2=I-M_1$. There are only two permutations of outcomes determining observables $\M_{12}$ and $\M_{21}$ associated with the following Choi operators
\begin{eqnarray}
  \label{eq:binary_choi}
\mathfrak{M}_{12}=M_1^\top\otimes\dyad{1}+M_2^\top\otimes\dyad{2}\,,\\
\mathfrak{M}_{21}=M_1^\top\otimes\dyad{2}+M_2^\top\otimes\dyad{1}\,.
\end{eqnarray}
		
\begin{theorem}\label{theorem-binary}
A binary observable associated with effects $M_1, M_2$ ($M_1\neq M_2$) can be perfectly labeled in a single-shot if and only if at least one of the effect operators is rank deficient.
\end{theorem}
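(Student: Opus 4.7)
The plan is to specialise the perfect-discrimination criterion $\mathfrak{M}_{12}(\xi\otimes I)\mathfrak{M}_{21}=O$ (which the paper has just recalled) to the concrete Choi operators displayed in Eq.~\eqref{eq:binary_choi}. Expanding the triple product block-by-block in the outcome register, the cross terms proportional to $\dyad{1}{2}$ and $\dyad{2}{1}$ drop out because of the orthogonality of $\ket{1}$ and $\ket{2}$, and only the two diagonal blocks survive. Since those blocks occupy orthogonal sectors of the output register, vanishing of the whole operator is equivalent to the two scalar conditions $M_1^\top\xi M_2^\top=0$ and $M_2^\top\xi M_1^\top=0$. Transposing and setting $\eta=\xi^\top$ (still a valid density operator) these become $M_2\eta M_1=0$ and $M_1\eta M_2=0$, and since all three operators involved are Hermitian the two equations are mutual Hermitian conjugates and therefore equivalent. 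Thus perfect labeling is possible if and only if there exists a density operator $\eta$ with $M_1\eta M_2=0$.

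For the ``if'' direction I would show that rank deficiency of either effect suffices. If $\ker M_1\neq\{0\}$, pick any unit vector $\ket{\psi}\in\ker M_1$ and take $\eta=\dyad{\psi}$; then $M_1\eta=0$ and a fortiori $M_1\eta M_2=0$. If instead $\ker M_2\neq\{0\}$, the argument is symmetric.

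For the ``only if'' direction I would use that $M_1$ and $M_2=I-M_1$ commute and hence share a common eigenbasis $\{\ket{i}\}$ with eigenvalues $\lambda_i\in[0,1]$ and $1-\lambda_i$. In this basis the operator equation $M_1\eta M_2=0$ decouples componentwise into $\lambda_i(1-\lambda_j)\eta_{ij}=0$ for all $i,j$. If both effects have full rank then every $\lambda_i$ lies strictly between $0$ and $1$, every prefactor is nonzero, and each matrix element $\eta_{ij}$ is forced to vanish. This forces $\eta=0$, contradicting $\operatorname{tr}\eta=1$ and ruling out perfect labeling.

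The only delicate step I foresee is the reduction of the two-block equation to the single tractable condition $M_1\eta M_2=0$; it rests on Hermiticity of $\eta$, $M_1$, $M_2$ and on the orthogonality of the pointer kets $\ket{1}$ and $\ket{2}$. Once this reduction is in place, the remainder is elementary linear algebra on a commuting pair of Hermitian operators and should pose no further obstacle.
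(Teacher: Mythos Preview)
Your proposal is correct and follows essentially the same route as the paper: both reduce the perfect-discrimination condition $\mathfrak{M}_{12}(\xi\otimes I)\mathfrak{M}_{21}=O$ to the single operator equation $M_1^\top\xi M_2^\top=O$ (equivalently $M_1\eta M_2=O$), construct the probe state from a kernel vector in the ``if'' direction, and derive a contradiction with $\operatorname{tr}\xi=1$ in the ``only if'' direction. The only cosmetic difference is that in the ``only if'' part the paper simply multiplies by $(M_1^\top)^{-1}$ and $(M_2^\top)^{-1}$ to force $\xi=O$, whereas you pass to the common eigenbasis of the commuting pair $M_1,M_2$ and kill each matrix element $\eta_{ij}$; both arguments express the same trivial fact that a product $A\eta B$ with $A,B$ invertible can vanish only if $\eta=O$.
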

\begin{proof}
If $M_1=M_2$, then the observable is labeled (in fact $M_1=M_2=1/2 I$, where $I$ is identity on $d$-dimensional system). If $M_1\neq M_2$,
the perfect discrimination condition
$\mathfrak{M}_{12}(\xi\otimes I)\mathfrak{M}_{21}=O$ implies
\begin{equation}
M_1^\top\,\xi\, M_2^\top = O\,.
\end{equation}
Suppose both operators $M_1$, $M_2$ are invertible, then also their transpositions are invertible and we may apply inverse operators to obtain $\xi=O$ contradicting the assumption $\xi$ is a density operator. That is, no binary observable with full rank effects can satisfy the perfect discrimination condition. On the contrary, suppose that there exists at least one state vector $\ket{\varphi}$ such that $M_1 \ket{\varphi} = 0$, or $M_2\ket{\varphi}=0$. Then $M_1^\top\dyad{\varphi}M_2^\top=0$, thus, the perfect discrimination condition holds.
\end{proof}

This result is clear. If the measurement device is used only once at most one of the outcomes is recorded. The purity of successful probe state $\xi=\dyad{\varphi}$ implies the potential auxiliary system is uncorrelated, rendering it irrelevant for labeling. Consequently, a single click can only reveal some information about the recorded outcome. And, this information is ``negative" in the following sense. Recording an outcome implies the effect is not the one having the vector $\varphi$ in its kernel. For example, if $M_1\ket{\varphi}=0$, then the recorded outcome is not described by the effect $M_1$. In the considered binary case this implies the unrecorded outcome can be associated with $M_1$ and the recorded one with $M_2$. 
		
\subsection{Non-binary measurements}
\noindent We refer to an observable with more than two outcomes as \textit{non-binary}. The ``negative" logic used to label binary observables cannot be extended to non-binary ones. Knowing the recorded outcome is not described by an effect $M_1$ does not help in general to identify neither the recorded, nor the unrecorded outcomes. It is straightforward to see that the recording of a single outcome of the unlabeled device cannot reveal information about all the outcomes. As such, we can formulate the following theorem.

\begin{theorem}\label{theorem-nonbinary}
  Perfect labeling of a non-binary observable with effects $M_1,\dots,M_n$ (all different, i.e. $M_j\neq M_k$ for all $j\neq k$) requires at least $(n-1)$ uses of the measurement device.
\end{theorem}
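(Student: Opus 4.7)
The plan is to show that with $k\leq n-2$ uses of the measurement device, some pair of permutations is intrinsically indistinguishable. The key observation is that in the measure-and-prepare form, each invocation of $\mathcal{M}_\sigma$ outputs a classical label in the basis $\{\ket{x_j}\}$; thus $k$ uses yield a classical label sequence $\vec{y}=(y_1,\dots,y_k)\in\Omega^k$, even for the most general adaptive tester with a quantum ancilla. Perfect labeling then requires the supports of the induced distributions $P_\sigma(\vec{y})$ and $P_{\sigma'}(\vec{y})$ to be disjoint whenever $\sigma\neq\sigma'$.

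The central lemma I would establish is the following: if $\sigma'=\tau_{ij}\circ\sigma$ differs from $\sigma$ only by the transposition exchanging labels $x_i$ and $x_j$, then $P_\sigma(\vec{y})=P_{\sigma'}(\vec{y})$ for every sequence $\vec{y}$ that contains neither $x_i$ nor $x_j$. The reason is that the effect triggered whenever the recorded label is $y_t$ equals $M_{\sigma^{-1}(y_t)}$, which agrees with $M_{\sigma'^{-1}(y_t)}$ precisely when $y_t\notin\{x_i,x_j\}$. An induction on $t$ then shows that along such a branch the conditional ancilla state entering step $t$ and the probability of the next label are identical under $\sigma$ and $\sigma'$, so the joint probability of the whole branch coincides.

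Combining the lemma with the perfect-labeling requirement forces, for every pair $\{i,j\}$ and every $\sigma$, the identity $\sum_{\vec{y}\,\text{avoids}\,\{x_i,x_j\}} P_\sigma(\vec{y})=0$; that is, the event ``neither $x_i$ nor $x_j$ is ever recorded in the $k$ uses'' must have zero probability. A union bound over the $\binom{n}{2}$ label pairs then requires that no pair of labels be missed with positive probability. However, with only $k\leq n-2$ uses the observed sequence has length at most $n-2$, so it must omit at least $n-k\geq 2$ labels and hence contain some missed pair with certainty---a direct contradiction. Therefore $k\geq n-1$.

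The main obstacle I anticipate is the second paragraph: pushing the argument through the most general quantum tester, in which the ancilla may remain coherently coupled to the channel across several uses. The remedy is to exploit that the output of $\mathcal{M}$ is diagonal in the label basis, so the protocol can equivalently be described as conditioning on the classical labels $y_t$; the branch-by-branch coincidence of $P_\sigma$ and $P_{\sigma'}$ then follows irrespective of what the tester does to its ancilla between uses.
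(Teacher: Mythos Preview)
Your argument is correct. The paper's own proof is two sentences: one use records exactly one outcome, only recorded outcomes can be labeled, hence $n-1$ uses are needed. You and the paper share the same underlying idea---labels that are never observed cannot be pinned down---but your route is genuinely more detailed. Your transposition lemma makes precise what the paper merely asserts: two permutations $\sigma$ and $\tau_{ij}\circ\sigma$ are indistinguishable on any branch that avoids $x_i$ and $x_j$, and you carry this through the most general adaptive tester with ancilla by exploiting that the channel output is diagonal in the label basis. The pigeonhole/union-bound finish (any length-$k\leq n-2$ sequence necessarily misses some pair, yet every pair-avoidance event must have probability zero) is also absent from the paper. What the paper buys is brevity; what your approach buys is an actual proof that withstands the objection ``but what about entangled ancillas and adaptive strategies?'', which the paper's wording does not address.
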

\begin{proof}
  Using the device once we record exactly one outcome that can be the subject
  of labeling. In general, only recorded outcomes can be labeled, thus,
  $n$-valued observable demands at least $n-1$ uses.
\end{proof}

\noindent Let us note that by identifying $n-1$ outcomes of the considered observable
we obtain a binary observable described by effects $X_1=M_1$ and $X_2=M_2+\cdots + M_n$. This ``binarisation" is a purely abstract construction, because for the unlabeled device itself we do not know which outcomes are identified. However, if it happens the conditions of Theorem \ref{theorem-binary} are met and $X_2$
is rank deficient, then we can label the outcome of non-binary observable
associated with $M_1$. In the second use of the device we may choose a different binarisation and potentially label a different outcome. Altogether there are $n$ different binarisations of this form, but exploiting $n-1$ of them is sufficient to accomplish the goal of perfect labeling.

Let us stress that the condition that all the effects are different is important. The situation might be different if some of the outcomes are described by the same effects. In what follows we will show an example of non-binary observable for which all outcomes can be labeled in single shot. Consider an observable with effects $M_1=\dyad{\varphi}$ and $M_2=M_3=\dots=M_n=\frac{1}{n-1}(I-\dyad{\varphi})\equiv M_1^\perp$. Using the probe state $\ket{\varphi}$, the recorded outcome is necessarily associated with the effect $M_1$, but simultaneously, the remaining effects are described by the effect $M_1^\perp$. One use of the observable is sufficient to accomplish perfect labeling. It is straightforward to design similar pathological examples of $n$-valued observables that can be perfectly labeled in $m$ uses for all $m\leq n$. But for generic observables the number of outcomes determines the number of required uses to completely label all the outcomes.

It also follows the single shot setting allows for {\it partial labeling} of the recorded outcome. For example, consider an unlabeled observable with effects $M_1,\dots, M_n$ with $m$ identical effects, i.e. $M_1=M_2=\cdots=M_m$ for $m<n$ (in case $m=n$ the outcomes are labeled). Assume $\tilde{M}=M_1+\cdots + M_m=mM_1$ has eigenvalue one, then there exist a vector state $\ket{\varphi}$ such that $\tilde{M}\ket{\varphi}=\ket{\varphi}$. That is, if the state $\ket{\varphi}$ is measured, then necessarily the recorded outcome is associated with effect $M_1$ and the task of partial labeling is accomplished. Let us note that using the same state $\ket{\varphi}$ also in second use does not guarantee labeling of another unlabeled outcome.

\begin{proposition}\label{prop:partial}
An effect $E\in\{M_1,\dots,M_n\}$ can be partially labeled in a single use of the observable, if and only if there are $1\leq m\leq n$ outcomes $j$ such that $M_j=E$ and the largest eigenvalue of the effect $E$ equals $1/m$.
\end{proposition}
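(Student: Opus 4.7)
The plan is to recast partial labeling of the effect $E$ as the existence of a probe state plus inference rule that, from a single recorded outcome $x_k$, certifies the associated effect to be $E$. Since the $n!$ permutations are a priori equiprobable, observing outcome $x_k$ on probe $\varrho$ yields a Bayesian posterior on the unknown effect $M_{\sigma^{-1}(k)}$ proportional to $\mathrm{tr}[\varrho M_j]$ (or, with an ancilla, $\mathrm{tr}[\varrho(I\otimes M_j)]$). I would argue that a certain conclusion ``$M_k=E$'' can be drawn if and only if there exists a probe whose outcome distribution is supported solely on indices $j$ with $M_j=E$, i.e.\ $\mathrm{tr}[\varrho M_j]=0$ for every $j$ with $M_j\neq E$.

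Let $m$ denote the number of indices $j$ with $M_j=E$ and set $\tilde M=\sum_{j:\,M_j=E}M_j=mE$. The condition above becomes $\mathrm{tr}[\varrho\,\tilde M]=m\,\mathrm{tr}[\varrho E]=1$. The POVM normalization $\sum_\ell M_\ell=I$ forces $\tilde M\le I$, i.e.\ $E\le(1/m)I$, so $1/m$ is automatically an upper bound on the spectrum of $E$. Existence of a state $\varrho$ saturating $\mathrm{tr}[\varrho\,\tilde M]=1$ is therefore equivalent to $\tilde M=mE$ having $1$ in its spectrum, i.e.\ to $E$ having $1/m$ as its largest eigenvalue. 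This yields the ``only if'' direction.

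For the converse, given any unit eigenvector $\ket{\varphi}$ of $E$ with eigenvalue $1/m$, I would take $\varrho=\dyad{\varphi}$ as probe: each of the $m$ outcomes associated with $E$ is recorded with probability $1/m$, while all other outcomes have probability zero, so whichever outcome is recorded can unambiguously be labeled by $E$.

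The step I expect to require the most care is ruling out an advantage from entangled test states and non-trivial testers of the form shown in Fig.~\ref{fig2}. To handle this I would invoke the tester normalization $\xi\otimes I$ for measure-and-prepare channels: if $\mathrm{tr}[\varrho\,(I\otimes\tilde M)]=1$ for a bipartite probe $\varrho$, then tracing out the ancilla gives $\mathrm{tr}[\xi\,\tilde M]=1$ for $\xi=\mathrm{tr}_{\mathrm{anc}}[\varrho]$, and any pure state in the spectral decomposition of $\xi$ is then an equally good unentangled probe, reducing the general setting to the rank-one analysis above.
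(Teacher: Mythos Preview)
Your sufficiency argument is exactly the paper's: take an eigenvector $\ket{\varphi}$ of $E$ with eigenvalue $1/m$ and compute $\bra{\varphi}(\sum_{j:M_j\neq E}M_j)\ket{\varphi}=\bra{\varphi}(I-mE)\ket{\varphi}=0$, so only outcomes carrying the effect $E$ can fire. The paper's proof in fact stops at this point and never spells out the converse.

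Your treatment therefore goes further than the paper in two respects. First, you supply the necessity direction via the clean observation that $\tilde M=mE\le I$ forces $\lambda_{\max}(E)\le 1/m$, with equality exactly when some state saturates ${\rm tr}[\varrho\,\tilde M]=1$; the paper states the ``only if'' but does not argue it. Second, you address whether entangled probes and the full tester formalism could help, which the paper does not discuss for this proposition. One small sharpening: the most general single-shot strategy also allows a measurement $F_\alpha$ on the ancilla after the device has clicked, so the relevant no-error condition is ${\rm tr}[\omega_\alpha(I-mE)]=0$ for the (unnormalized) conditional operator $\omega_\alpha={\rm tr}_{\rm anc}[(F_\alpha\otimes I)\varrho]$ rather than for the full reduced state $\xi$. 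Normalizing $\omega_\alpha$ yields the same saturation condition $m\,{\rm tr}[\hat\omega_\alpha E]=1$, so your spectral conclusion is unchanged; only the intermediate sentence ``tracing out the ancilla gives ${\rm tr}[\xi\tilde M]=1$'' should be replaced by this slightly finer reduction.
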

\begin{proof}
  Let $\ket{\varphi}$ be eigenvector associated with the eigenvalue $1/m$. Then the probability observing any of the outcomes $M_j\neq E$ equals
  $\bra{\varphi}(\sum_{j: M_j\neq E}M_j)\ket{\varphi}=\bra{\varphi}(I-mE)\ket{\varphi}=0$. In other words, only outcomes associated with $E$ can be recorded. Therefore, the observed outcome is labeled by the effect $E$.
  \end{proof}

%%%%%%%%%%%%%%%%%%%%%%%%%%%%%%%%%%%%%%%%%%%%%%%%%%%%%%%
%%%%%%%%%%%%%%%%%%%%%%%%%%%%%%%%%%%%%%%%%%%%%%%%%%%%%%%
\section{Minimum-error labeling}
\noindent In this section we will study limitations of single shot labeling for general observables. That is, we will quantify optimal success probability for labeling given the observable is binary and used only once, that is encapsulated as the following the theorem.
\begin{theorem}
	The optimal minimum error probability for labeling a binary observable associated with the effects $\{M_1, M_2\}$ is given by
	\begin{eqnarray}
		\label{eq:minerror_labeling_0}
		p_{\text{e}}&=&\frac{1}{2}(1-||M_1-M_2||_2)\,.
	\end{eqnarray}
\end{theorem}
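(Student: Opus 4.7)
The plan is to use the restricted tester form for measure-and-prepare channels stated in the preliminaries, explicitly expand the average error, reduce the minimisation to a maximum of a Schatten-$1$ norm, and finally identify that maximum with the spectral norm of $M_1-M_2$.

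First, I would write $\T_\alpha=\mathsf{H}_1^{(\alpha)}\otimes\dyad{1}+\mathsf{H}_2^{(\alpha)}\otimes\dyad{2}$ with $\mathsf{H}_k^{(\alpha)}\geq O$ and $\mathsf{H}_k^{(1)}+\mathsf{H}_k^{(2)}=\xi$ for $k=1,2$. Substituting this together with the binary Choi operators from Eq.~(\ref{eq:binary_choi}) into $p_{\rm e}=\tfrac12\text{tr}[\mathfrak{M}_{12}\T_2]+\tfrac12\text{tr}[\mathfrak{M}_{21}\T_1]$ kills every off-diagonal contribution in the classical register. Using $\mathsf{H}_k^{(2)}=\xi-\mathsf{H}_k^{(1)}$ together with $M_1+M_2=I$, the expression collapses to
\begin{equation*}
p_{\rm e}=\tfrac{1}{2}\left(1-\text{tr}\!\left[(M_1-M_2)^{\top}\bigl(\mathsf{H}_1^{(1)}-\mathsf{H}_2^{(1)}\bigr)\right]\right),
\end{equation*}
so minimising $p_{\rm e}$ is equivalent to maximising $\text{tr}[\Delta^\top(\mathsf{H}_1^{(1)}-\mathsf{H}_2^{(1)})]$, with $\Delta:=M_1-M_2$, subject to $O\leq\mathsf{H}_1^{(1)},\mathsf{H}_2^{(1)}\leq\xi$ and $\xi$ a density operator.

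Next, I would exploit that $\mathsf{H}_1^{(1)}$ and $\mathsf{H}_2^{(1)}$ decouple in the objective and each ranges independently over the order interval $[O,\xi]$. For fixed $\xi$ with support projector $P_\xi$, the substitution $\mathsf{H}_k^{(1)}=\xi^{1/2}K_k\xi^{1/2}$ with $0\leq K_k\leq P_\xi$ reduces each partial optimum to $\max_K\text{tr}[(\pm\xi^{1/2}\Delta^\top\xi^{1/2})K]$, which equals the trace of the positive, respectively negative, part of the Hermitian operator $\xi^{1/2}\Delta^\top\xi^{1/2}$. Their sum is the Schatten-$1$ norm of this operator, so the minimisation reduces to
\begin{equation*}
\min_{\rm tester}p_{\rm e}=\tfrac12\bigl(1-\max_{\xi}\|\xi^{1/2}\Delta^\top\xi^{1/2}\|_1\bigr).
\end{equation*}

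Finally, I would evaluate the remaining maximisation by splitting $\Delta^\top=\Delta^\top_+-\Delta^\top_-$. Since each $\xi^{1/2}\Delta^\top_\pm\xi^{1/2}$ is positive, the triangle inequality gives $\|\xi^{1/2}\Delta^\top\xi^{1/2}\|_1\leq\text{tr}[\xi\Delta^\top_+]+\text{tr}[\xi\Delta^\top_-]=\text{tr}[\xi\,|\Delta^\top|]\leq\|\Delta^\top\|_2\,\text{tr}[\xi]=\|M_1-M_2\|_2$, where $\|\cdot\|_2$ is understood as the spectral (largest-singular-value) norm. Equality is attained by choosing the pure test state $\xi=\dyad{\varphi}$ with $|\varphi\rangle$ an eigenvector of $\Delta^\top$ corresponding to the eigenvalue of largest absolute value: then $\xi^{1/2}\Delta^\top\xi^{1/2}=\langle\varphi|\Delta^\top|\varphi\rangle\dyad{\varphi}$ is rank one with trace norm exactly $\|M_1-M_2\|_2$. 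The main subtlety I anticipate is justifying the change of variables $\mathsf{H}_k^{(1)}=\xi^{1/2}K_k\xi^{1/2}$ when $\xi$ is rank-deficient, which is handled cleanly by restricting the argument to the support of $\xi$; it is in any case moot because the global optimum is attained at a pure test state.
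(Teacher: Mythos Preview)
Your argument is correct, but it takes a genuinely different route from the paper's. The paper does not invoke the block-diagonal tester form $\T_\alpha=\sum_k\mathsf{H}_k^{(\alpha)}\otimes\dyad{x_k}$ at all; instead it keeps a general $\mathsf{T}_{12}$, rewrites $p_{\rm e}=\tfrac12+\tfrac12\,{\rm tr}\bigl[\mathsf{T}_{12}\,(M_1^\top-M_2^\top)\otimes(\dyad{1}-\dyad{2})\bigr]$, expands in the eigenbasis $\{\ket{\omega_x\otimes j}\}$ of the second factor, and then argues directly on the diagonal entries $\bra{\omega_x\otimes j}\mathsf{T}_{12}\ket{\omega_x\otimes j}$ together with the bound $\mathsf{T}_{12}\leq\xi\otimes I$ to show the optimum is attained by the rank-one tester $\mathsf{T}_{12}=\dyad{\omega_x\otimes j_x}$ associated with the eigenvalue of largest modulus. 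Your approach instead exploits the block-diagonal reduction from the preliminaries, decouples the two classical branches, and passes through the intermediate identity $\min p_{\rm e}=\tfrac12\bigl(1-\max_\xi\|\xi^{1/2}\Delta^\top\xi^{1/2}\|_1\bigr)$ before bounding by the spectral norm via $\|\xi^{1/2}\Delta^\top\xi^{1/2}\|_1\leq{\rm tr}[\xi\,|\Delta^\top|]\leq\|\Delta\|_2$. What your route buys is a cleaner, fully rigorous optimisation (the paper's ``set $\mathsf{T}_{12}$ to vanish on all positive terms'' step is somewhat heuristic) and a useful intermediate trace-norm formula valid for arbitrary $\xi$; what the paper's route buys is a shorter path to the explicit optimal tester and probe state without the $\xi^{1/2}K\xi^{1/2}$ change of variables. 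Both land on the same pure probe eigenstate of $M_1-M_2$.
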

 \begin{proof} Consider a binary observable with effects $M_1,M_2$. Following Eq.\eqref{eq:binary_choi} we denote by $\mathfrak{M}_{12}$ and $\mathfrak{M}_{21}$ the Choi operators associated with two possible labelings. Let us denote by $\mathsf{T}_{12}$ and $\mathsf{T}_{21}$ the elements of tester associated with the labelings $\{M_1,M_2\}$ and $\{M_2,M_1\}$, respectively, and satisfying the normalisation condition $\mathsf{T}_{12}+\mathsf{T}_{21}=\xi\otimes I$, where $\xi$ is a density operator. The average error probability reads
\begin{equation}
p_{\text{e}} = \frac{1}{2}\text{tr}[\mathsf{T}_{12} \mathfrak{M}_{21}]+\frac{1}{2}\text{tr}[\mathsf{T}_{21}\mathfrak{M}_{12}]\,.
\end{equation}
Using the definitions and tester's normalization we obtain
\begin{eqnarray}
  \nonumber p_{\text{e}}&=&
  \frac{1}{2}\text{tr}\left[\mathsf{T}_{12}(\mathfrak{M}_{21}-\mathfrak{M}_{12})+(\xi\otimes I)\mathfrak{M}_{12} \right] \\
  \nonumber &=& \frac{1}{2}\text{tr}[\mathsf{T}_{12}((M_1^T-M_2^T)\otimes(\dyad{1}-\dyad{2}))]+\frac{1}{2}\,.
\end{eqnarray}
In the spectral form $M_1^T-M_2^T=\sum_x \mu_x \ket{\omega_x}\bra{\omega_x}$, where $\mu_x$ are real eigenvalues and $\ket{\omega_x}$ are the associated eigenvectors. It follows the operator $(M_1^T-M_2^T)\otimes(\dyad{1}-\dyad{2})$ has $2d$ eigenvalues $\pm\mu_x$ (including multiplicities) and error probability equals
\begin{eqnarray}
  \nonumber
  p_{\text{e}}&=&
  \frac{1}{2}+\frac{1}{2}\sum_x \mu_x
  \bra{\omega_x\otimes 1} \mathsf{T}_{12} \ket{\omega_x\otimes 1}\\
  \nonumber
  & & \qquad -\frac{1}{2} \sum_x \mu_x \bra{\omega_x\otimes 2} \mathsf{T}_{12} \ket{\omega_x\otimes 2}
\end{eqnarray}
In order to minimize the error probability we need to suppress positive terms and maximize the negative one. The negativity, or positivity of the terms depends solely on the value of $\mu_x$, because the operator $\mathsf{T}_{12}$ is positive. We know there is one positive and one negative term for each nonzero $|\mu_x|$. Set $\mathsf{T}_{12}$ to vanish on all positive terms, i.e. $\mathsf{T}_{12}\ket{\omega_x\otimes j}=0$ if $(-1)^{(j-1)}\mu_x$ is positive. Then,
\begin{eqnarray}
  \nonumber
  p_{\text{e}}&=&
  \frac{1}{2}-\frac{1}{2}\sum_x |\mu_x|
  \bra{\omega_x\otimes j_x}\mathsf{T}_{12}\ket{\omega_x\otimes j_x}
  \,,
\end{eqnarray}
where $j_x=1$ if $\mu_x<0$ and $j_x=2$ if $\mu_x>0$. Due to normalisation
$\mathsf{T}_{12}\leq \xi\otimes I$, it follows
$\bra{\omega_x\otimes j}\mathsf{T}_{12}\ket{\omega_x\otimes j}
\leq\bra{\omega_x}\xi\ket{\omega_x}$. The numbers $\bra{\omega_x}\xi\ket{\omega_x}$ define a probability, thus, the values $q_k=\bra{\omega_x\otimes j_x}\mathsf{T}_{12}\ket{\omega_x\otimes j_x}$ form a subnormalized probability distribution. Our goal is to maximize the expression $\sum_x |\omega_x| q_x$ over subnormalized probability distributions. It turns out, the maximum is achieved when $\mathsf{T}_{12}=\ket{\omega_x\otimes j_x}\bra{\omega_x\otimes j_x}$ associated with the largest eigenvalue $|\mu_x|$. That is, the minimum error probability equals
\begin{eqnarray}
  \label{eq:minerror_labeling}
  p_{\text{e}}&=&
  \frac{1}{2}(1-\max_x |\mu_x|)=\frac{1}{2}(1-||M_1-M_2||_2)\,,
\end{eqnarray}
where $||\cdot||_2$ is the operator 2-norm defined as
$||X||_2=\sup_{\varphi\neq 0} \ip{X\varphi}/\ip{\varphi}=\lambda_{\max}$, where
$\lambda_{\max}$ denotes the largest singular value of $X$.

\end{proof}

\noindent Because of the normalisation $M_1+M_2=I$ the identity
$M_1-M_2=2M_1-I$ holds. It follows the eigenvalues and eigenvectors
of $M_1$, $M_2$ and $M_1-M_2$ are related. Indeed, all the operators mutually commute, because $[M_1,M_2]=0$, hence, they share the same system of eigenvectors. Let us denote by $\lambda_x$ the eigenvalues of $M_1$ and $\kappa_x$ the eigenvalues of $M_2$. Then $\lambda_x+\kappa_x=1$ and
$\mu_x=\lambda_x-\kappa_x=2\lambda_x-1$. Consequently, 
the maximal value of $|\mu_x|$ is achieved either for the
maximal, or the minimal values of $\lambda_x$. Let us denote by $\lambda_{\max}$
and $\lambda_{\min}$ the maximal and the minimal eigenvalue of $M_1$.
Let us stress that the minimal eigenvalue of $M_1$ is the maximal
one for $M_2$ and {\it vice versa}. It follows the largest of the distances
of the minimal and the maximal eigenvalue
from $1/2$ (being the value for which $\mu=0$)
determines the value of the minimum error probability.
In particular, $|\mu_x|=1$ (perfect labeling) only if either $\lambda_{\min}=0$,
or $\lambda_{\max}=1$. The binary observables with all the eigenvalues of $M_1$
close to $1/2$ are the ones for which the minimum error is the largest, thus,
the labeling is least reliable.

The optimal labeling procedure consists of preparation of the state
$\ket{\omega_x}$ being the eigenvector of $M_1$ associated with the
maximal, or minimal eigenvalue $\lambda_x$ maximizing the value
of $|\mu_x|$. The effect $M_1$ is recorded with probability
$\lambda_x=\bra{\omega_x}M_1\ket{\omega_x}$ and the effect $M_2$ with
probability $\kappa_x=\bra{\omega_x}M_2\ket{\omega_x}$.
If $\lambda_x>\kappa_x$, then we conclude the recorded outcome is $M_1$
and unrecorded one is $M_2$. In this case the probability $\kappa_x$
characterizes the error and indeed introducing $|\mu_x|=\lambda_x-\kappa_x$
into Eq.~(\ref{eq:minerror_labeling})
we obtain $p_{\text{e}}=(1-(\lambda_x-\kappa_x))/2=\kappa_x$. Similarly, if
$\lambda_x<\kappa_x$ we have $|\mu_x|=\kappa_x-\lambda_x$ and
$p_{\text{e}}=\lambda_x$, thus, the recorded outcome is labeled by
the effect $M_2$ and the unrecorded by $M_1$. In other words, in the optimal
minimum error labeling procedure we label the effect
with the largest eigenvalue by performing the unlabeled
measurement on the associated eigenvector state $\ket{\omega_x}$.

For nonbinary observables we are meeting with the same basic limitation.
Single shot can reveal the identity of unrecorded outcomes only if they
are described by an identical effect. In general, we can talk about
error probability for the question of partial labeling. That is, we use
the unlabeled measurement device once and label the recorded outcome by
one of the effects $M_1,\dots, M_n$ forming the $n$-outcome observable.
Suppose $M_x\in\{M_1,\dots, M_n\}$ is chosen to be the label for the
recorded outcome. Then
$p_{\text{e}}=\sum_{j\neq x}{\rm tr}[\omega M_j]={\rm tr}\left[\omega\sum_{j\neq x}M_j\right]={\rm tr}[\omega(I-M_x)]=1-{\rm tr}[\omega M_x]$ is the probability of error for
labeling the recorded outcome by $M_x$. It is minimized if $\tr{\omega M_x}$
is maximal. Finding $\lambda_{\max}=\max_j ||M_j||$, where $||\cdot||$ is the
operator norm that equals to the largest eigenvalue of $M_j$, we obtain the
minimum error probablity reads $p_{\text{e}}=1-\lambda_{\max}$ and
$\omega=\ket{\omega_{\max}}\bra{\omega_{\max}}$, where $\ket{\omega_{\max}}$
is the eigenvector associated with the eigenvalue $\lambda_{\max}$. Let us denote
by $M_{\max}$ the effect with eigenvalue $\lambda_{\max}$. In the optimal
partial labeling procedure the outcome recorded in the measurement
of $\ket{\omega_{\max}}$ is labeled by $M_{\max}$.

%%%%%%%%%%%%%%%%%%%%%%%%%%%%%%%%%%%%%%%%%%%%%%%%%%
\subsection{Example: Biased coin-tossing binary observables}
\noindent This family is formed by trivial effects $M_1=qI$ and $M_2=(1-q)I$. The formula for minimum error probability implies
$p_{\rm e}=1-\lambda_{\max}$, where $\lambda_{\max}$ is the largest eigenvalue of $M_1$ and $M_2$. Assuming $1/2\leq q\leq 1$ we have $\lambda_{\max}=q$. Interestingly, the minimum error for this trivial observable can achieve any value including perfect discrimination between $O$ and $I$ when $q=1$. 
For unbiased coin-tossing observable ($q=1/2$) the minimum error equals 1/2. Nevertheless, this is not completely correct. In particular, for unbiased situation no verification is needed as both outcomes are described by the same effect and there is nothing unknown about the labels. That is, contrary to the minimum-error formula, but in accordance with the Theorem~\ref{theorem-binary}, the labeling in this case is trivially free of any error.

%%%%%%%%%%%%%%%%%%%%%%%%%%%%%%%%%%%%%%%%%%%%%%%%%%
%%%%%%%%%%%%%%%%%%%%%%%%%%%%%%%%%%%%%%%%%%%%%%%%%%
\section{Unambiguous labeling}
\noindent The discrimination procedure is unambiguous if so-called no-error conditions
are satisfied and inconclusive result is allowed.
For the case of labeling of binary observables this means
\begin{equation}
\text{tr}[\mathfrak{M}_{12} \T_{21}] = \text{Tr}[\mathfrak{M}_{21} \T_{21}] = 0\,,
\end{equation}
with the normalization $\T_{12}+\T_{21}+\T_?=\xi\otimes I$. The operator $\T_?$
is identified with the inconclusive result when no labeling is made.
The performance of unamiguous labeling is evaluated by the so-called failure
probability
\begin{equation}
  p_{\rm f}=\frac{1}{2}{\rm tr}[\T_?(\mathfrak{M}_{12}+\mathfrak{M}_{21})]=\frac{1}{2}{\rm tr}[\T_?]
\end{equation}
associated with the occurence of inconclusive result. The goal is to design
a tester $\T_{12},\T_{21},\T_?$ such that the failure probability is minimized.
A nontrivial solution is possible if the no-error conditions can be satisfied
for nonzero operators $\T_{12},\T_{21}$. It follows $\mathfrak{M}_{12}$, or
$\mathfrak{M}_{21}$ must have nonempty kernels. This is possible only if
at least one of the effects $M_1,M_2$ is rank deficient. However, Theorem \ref{theorem-binary} implies such observable can be perfectly labeled, thus $\T_?=O$ ($p_{\rm f}=0$) and for binary observables the unambigous labeling reduces to perfect labeling. 

For nonbinary measurements the no-error conditions also require that at least one of the effects is rank deficient. However, the single use of the unlabeled measurement device does not provide sufficient information for the unamiguous labeling of all effects. We may be able to label partially a particular effect $E$, but all the situations are covered by Theorem~\ref{prop:partial}. It follows the unambigous effect labeling is always perfect in a sense the recorded results are always conclusive.
%%%%%%%%%%%%%%%%%%%%%%%%%%%%%%%%%%%%%%
%%%%%%%%%%%%%%%%%%%%%%%%%%%%%%%%%%%%%%
\section{Antilabeling}
\noindent Analyzing the unambiguous labeling for non-binary measurements we can observe the following. Although the no-error conditions do not allow us to label the rank deficient effects $E$, we can still use the corresponding state $\ket{\varphi}$,
for which $E\ket{\varphi}=0$ as the probe state. It is straightforward to see that recorded outcome cannot be labeled by the effect $E$, because its probability in this case vanishes. Instead of assigning a label for the recorded outcome we may exclude all the labels $E\in\{M_1,\dots,M_n\}$, for which $M_j\ket{\varphi}=0$.

This type of discrimination tasks was named antidistinguishability \cite{HK2018} and following this we will call this process antilabeling. For binary case, it trivially coincides with labeling, however, for non-binary case the question is open. Since in this study we are focused only to single-shot protocols the antilabeling can be also achieved only partially. That is, not all effects can be excluded in single run of the experiment. The question of partial antilabeling reduces to antilabeling of some fixed effect $E\in\{M_1,\dots,M_n\}$. Rank-deficiency is necessary and sufficient for partial antilabeling of such effects. Let us stress that this is not the case for the question of partial labeling. 

Consider rank-one non-binary measurements, i.e. $M_j=q_j\ket{\varphi_j}\bra{\varphi_j}$, such that $\ket{\varphi_j}\neq\ket{\varphi_k}$ if $j\neq k$. In general, these effects cannot be partially labeled unless $q_j=1$ for some $j$. However, all of them can be partially antilabeled. It is sufficient to choose a probe state $\ket{\varphi_j^\perp}$ and the recorded outcome cannot be described by the effect $M_j=q_j\ket{\varphi_j}\bra{\varphi_j}$. It means the recorded outcome is antilabeled not to be $M_j$. If there are more $M_k$ such that $M_k\ket{\varphi_j^\perp}=0$, then all such effects are also antilabeled and we may conclude the recorded outcome is not labeled by any of those effects $M_k$.

%%%%%%%%%%%%%%%%%%%%%%%%%%%%%%%%%%%%%%
%%%%%%%%%%%%%%%%%%%%%%%%%%%%%%%%%%%%%%
\section{Summary}
\noindent In this work, we introduced the problem of measurement labeling being a special instance of discrimination problem. We completely characterized the various form of labeling question given the unlabeled measurement device is used only once (single-shot). In particular, we showed that perfect labeling exist for binary observables if the effects are rank deficient (Theorem~\ref{theorem-binary} ). For non-binary measurements the perfect labeling generally requires more uses of the measurement device (Theorem~\ref{theorem-nonbinary}), however, we characterize the measurements for which single shot is sufficient. The Proposition~\ref{prop:partial} speficies conditions when single shot reveals partial information about the observed outcomes, thus, perfectly labels particular outcomes. Further, we evaluated the minimum-error probability for the case of binary measurements. We found (Eq.\eqref{eq:minerror_labeling}) it is proportional to 2-norm of the difference between effects forming the observable, thus by the largest eigenvalue of the effects. Unambiguous labeling for binary measurements reduces to perfect one. The case of nonbinary measurements motivated us to introduce the concept of antilabeling that excludes labels of the recorded outcomes instead of identifying them.

\textcolor{black}{The problem of labeling is a specific question that is not only of foundational interest. Naturally, it refers to situations when the identity of labels is not available, but how one can lost this type of information? It is unlikely someone will question the description of detectors, for which clicks simply means the particle is detected. However, detectors are used in more sophisticated experiments to measure quantum properties of systems. In fact, we typically combine many devices and clicks of different detectors, or their combinations constitute different (more abstract) outcomes. The risk of forgetting the labels does not originate in the physics itself, but in its users. The labeling serves as a relatively simple tool to operationally retrieve the identity of outcomes without the necessity to access (sometimes invasively) the details of experimental setup.}
	
\section*{Acknowledgements}
\noindent Authors acknowledges the support of projects APVV-22-0570 (DEQHOST) and VEGA 2/0183/21 (DESCOM). N.S.R. acknowledges Sk Sazim and Seyed Arash Ghoreishi for discussions and providing insights into selected calculations.  The same author acknowledges Nana Siddhartha Yenamandala, and Leevi Leppäjärvi and Oskari Kerppo for discussions regarding antidiscrimination.

%%%%%%%%%%%%%%%%%%%%%%%%%%%%%%%%%%%%%%%%%%%%%%%%%%%%%
%%%%%%%%%%%%%%%%%%%%%%%%%%%%%%%%%%%%%%%%%%%%%%%%%%%%%


\begin{thebibliography}{99}

\bibitem{heinosaari2011mathematical}
T. Heinosaari and M. Ziman,
The mathematical language of quantum theory: from uncertainty to entanglement,
(Cambridge University Press, 2011).

\bibitem{helstrom1969quantum}
C. W. Helstrom,
Quantum detection and estimation theory,
J. Stat. Phys. 1, 231–252 (1969). https://doi.org/10.1007/BF01007479

\bibitem{HRT2023}
T. Heinosaari, D. Reitzner, A. Toigo,
Anticipative measurements in hybrid quantum-classical computation,
Phys. Rev. A 107, 032612 (2023) 

\bibitem{JFDY2006}
Z. Ji, Y. Feng, R. Duan, and M. Ying,
Identification and distance measures of measurement apparatus,
Phys. Rev. Lett. 96, 200401 (2006).

\bibitem{ZH2008}
M. Ziman and T. Heinosaari,
Discrimination of quantum observables using limited resources,
Phys. Rev. A 77, 042321 (2008).

\bibitem{ZHS2009}
M. Ziman, T. Heinosaari, and M. Sedl\'ak,
Unambiguous comparison of quantum measurements,
Phys. Rev. A 80, 052102 (2009).

\bibitem{SZ2014}
M. Sedl\'ak and M. Ziman,
Optimal single-shot strategies for discrimination of quantum measurements,
Phys. Rev. A 90, 052312 (2014).

\bibitem{PPKK2018}
Z. Puchala, L. Pawela, A. Krawiec, and R. Kukulski,
Strategies for optimal single-shot discrimination of quantum measurements,
Phys. Rev. A 98, 042103 (2018).

\bibitem{PPKK2021}
Z. Puchala, L. Pawela, A. Krawiec, R. Kukulski, and M. Oszmaniec,
Multiple-shot and unambiguous discrimination of von Neumann measurements,
Quantum 5, 425 (2021).

\bibitem{krawiec2020discrimination}
A. Krawiec, L. Pawela, and Z. Puchala,
Discrimination of POVMs with rank-one effects,
Quantum Inf. Processing 19, 1 (2020)

\bibitem{choi1975completely}
M.-D. Choi,
Completely positive linear maps on complex matrices,
Linear algebra and its applications 10, 285 (1975).

\bibitem{jamiolkowski1972linear}
A. Jamiolkowski,
Linear transformations which preserve trace and positive semidefiniteness of operators,
Reports on Mathematical Physics 3, 275 (1972).

\bibitem{gutoski2007toward}
G. Gutoski and J. Watrous,
Toward a general theory of quantum games,
Proceedings of the thirty-ninth annual ACM symposium on Theory of computing (2007) pp. 565–574.

\bibitem{chiribella2008circuit}
G.~Chiribella, G.~M.~D'Ariano, P.~Perinotti,
Quantum Circuits Architecture,
Phys. Rev. Lett. 101, 060401 (2008).

\bibitem{ziman2008process}
M. Ziman,
Process positive-operator-valued measure: A mathematical framework for the description of process tomography experiments,
Phys. Rev. A 77, 062112 (2008).

\bibitem{chiribella2008memory}
G.~Chiribella, G.~M.~D’Ariano, and P.~Perinotti,
Memory effects in quantum channel discrimination,
Phys. Rev. Lett. 101, 180501 (2008).

\bibitem{HK2018}
T. Heinosaari and O. Kerppo,
Antidistinguishability of pure quantum states,
J. Phys. A: Math. Theor. 51, 365303 (2018).

\end{thebibliography}
\end{document}